\numberwithin{equation}{section} 
 \newtheorem{lemma}{Lemma}[section]
 \newtheorem{theorem}[lemma]{Theorem}
 \newtheorem{definition}[lemma]{Definition}
 \newtheorem{rem}[lemma]{Remark}
\newenvironment{remark}{\begin{rem}}{\hspace*{\fill}$\diamondsuit$\end{rem}}
 \newtheorem{ex}[lemma]{Example}
\begin{document}

\title{Normalized Google Distance of Multisets with Applications}
\author{Andrew R. Cohen\thanks
{Andrew Cohen is with the Department of Electrical and Computer Engineering, 
Drexel University.
Address: A.R. Cohen, 3120--40 Market Street, 
Suite 313, Philadelphia, PA 19104,
USA. Email: {\tt acohen@coe.drexel.edu}} 
and Paul M.B. Vit\'{a}nyi
\thanks{
Paul Vit\'{a}nyi is with the national research center for mathematics 
and computer science in the Netherlands (CWI),
and the University of Amsterdam.
Address:
CWI, Science Park 123,
1098XG Amsterdam, The Netherlands.
Email: {\tt Paul.Vitanyi@cwi.nl}.
}}


\maketitle

\begin{abstract}
Normalized Google distance (NGD) is a relative semantic distance
based on the World Wide Web 
(or any other large electronic database, 
for instance Wikipedia) 
and a search engine that returns aggregate page counts. 
The earlier NGD between
pairs of search terms (including phrases) is not sufficient 
for all applications.
We propose an NGD of finite multisets of 
search terms that is better for many applications. 
This gives a relative semantics shared by a multiset of search
terms. We give applications and compare the results
with those obtained using the pairwise NGD. The derivation 
of NGD method
is based on Kolmogorov complexity.

{\em Index Terms}---
Normalized Google distance, multisets, pattern recognition, 
data mining, similarity, classification, Kolmogorov complexity, 
\end{abstract}

\section{Introduction}
\label{sect.intro}

The classical notion of 
Kolmogorov complexity \cite{Ko65} is an objective measure 
for the information in an 
a {\em single} object, and information distance measures the information 
between a {\em pair} of objects \cite{BGLVZ98}. 
Information distance leads to
the {\em normalized compression distance} (NCD) \cite{Li03,CV04}
by normalizing it to obtain
the so-called similarity metric, and subsequently 
approximating the Kolmogorov complexity through real-world
compressors.
This NCD is
a parameter-free, feature-free, and alignment-free
similarity measure  that
has found many applications in pattern recognition, phylogeny, clustering,
and classification, see the many references in Google scholar to \cite{CV04}.

There arises the question of the
shared information between many objects
instead of just a pair of objects.  
To satisfy such aims the information distance 
measure has been extended from pairs to nonempty finite multisets \cite{Li08}. 
More properties of this distance were investigated in \cite{Vi11}.
For certain applications we require a normalized version. 
For instance,
classifying an object into one or another of several classes we aim
for the class of which the NCD for multisets grows the least.
The new NCD was applied to classification questions that were
earlier treated with the pairwise NCD. The results obtained 
were significantly better \cite{CV13}. 

Up till now the objects considered can be viewed as 
finite computer files that carry all their properties within.
However, there are also objects that carry all their properties
without like `red' or `three' or are not a computer
files like `Einstein' or `chair.' Such objects are represented by name.
Some objects can be viewed as either text or name, 
such as the text of ``Macbeth by Shakespeare''
or the name ``Macbeth by Shakespeare.''

In the name case we define a similarity distance
based on the background information
provided by Google or any search engine that produces aggregate page 
counts. Such search engines discover the ``meaning'' of words and 
phrases relative to other words and phrases in the sense
of producing a relative semantics \cite{CV07}. There, the distance between
the search terms $x$ and $y$ is given as 
the {\em normalized Google distance} (NGD) 
by 
\begin{equation}\label{eq.ngd}
 NGD(x,y)
=  \frac{  \max \{\log f(x), \log f(y)\}  - \log f(x,y) }{
\log N - \min\{\log f(x), \log f(y) \}},
\end{equation}
where $f(x)$ denotes the number of pages containing occurrences
of $x$, $f(x,y)$
denotes the number of pages containing occurrences
of both $x$ and $y$, and $N$ denotes the
total number of web pages indexed by Google 
(or a multiple thereof, see below). We use the binary logarithm
denoted by ``$\log$'' throughout. The NGD is widely applied,
for example \cite{BMI07,GKAH07,XWF08,WM09,CL10} and the many references to
\cite{CV07} in Google scholar.

\subsection{An Example}
On 9 April 2013 googling for ``Shakespeare'' gave 130,000,000 hits;
googling for ``Macbeth'' gave 26,000,000 hits; and googling
for ``Shakespeare'' and ``Macbeth'' together gave 20,800,000 hits.
The number of pages indexed by Google was estimated by the number
of hits of the search term ``the'' which was 25,270,000,000 hits. Assuming
there are about 1,000 search terms on the average page this gives 
$N=25,270,000,000,000$.
Hence $NGD(Shakespeare, Macbeth) = (26.95 - 24.31)/(44.52-24.63)=0.13$.
According to the formula $NGD(x,y)=0$ means $x$ is the same as $y$ (rather the
implication $\max\{f(x),f(y)\}= f(x,y)$) and $NGD(x,y)=1$ means
$x$ is as unlike $y$ as is possible. Hence Shakespeare and Macbeth are
very much alike according to the relative semantics supplied by Google.

\subsection{Related Work}\label{sect.relwork}

In \cite{Li08} the notion is introduced of the information required to
go from any object in a finite multiset of objects to any other object in the
multiset. Let $X$ denote a finite multiset of 
$n$ finite binary strings defined by (abusing the set notation)
$\{x_1, \ldots, x_n\}$, the constituting
elements (not necessarily all different) 
ordered length-increasing lexicographic.
We use multisets and not sets, since in a set all 
elements are different while here we are interested in the situation were
some or all of the elements are equal. 
Let $U$ be the reference universal Turing machine, for convenience the prefix
one \cite{LV08}.
We define the {\em information distance} in $\{x_1, \ldots, x_n\}$
by $E_{\max} (x_1, \ldots , x_n) = \min \{|p|: U(x_i,p,j)=x_j$ for all 
$x_i,x_j, \; 1 \leq i,j \leq n \}$.
It is shown in \cite{Li08}, Theorem 2, that 
\begin{equation}\label{eq.li08}
E_{\max}(x_1, \ldots , x_n)=
\max_{x:x \in \{x_1, \ldots , x_n\}} K(x_1, \ldots , x_n|x),
\end{equation}
up to a logarithmic additive term.

The information distance in \cite{BGLVZ98}
between strings $x_1$ and $x_2$ is denoted $E_{\max}(x_1,x_2) 
= \max\{K(x_1|x_2),K(x_2|x_1)\}$.
In \cite{Vi11} we introduced the notation $X=\{x_1, \ldots, x_n\}$ so that
$E_{\max}(X) = \max_{x:x \in X} K(X|x)$.
The two notations coincide for $|X|=2$ since $K(x,y|x)=K(y|x)$ up
to an additive constant term. The quantity $E_{\max}(X)$ is called 
the information distance in $X$. It comes in two flavors:
the pairwise version for $|X|=2$ and the multiset version for $|X| \geq 2$. 
The normalized pairwise version was made computable using real-world 
compressors to approximate the incomputable Kolmogorov complexity.
Called the normalized compression distance (NCD) it has turned out to be
suitable for determining similarity between pairs of objects, for phylogeny,
hierarchical clustering,  heterogeneous data clustering, anomaly detection,
and so on \cite{Li03,CV04}. Applications abound.
In \cite{CV07} the name case for pairs was resolved by using
the World Wide Web as database and Google as query mechanism 
(or any other search engine that give an
aggregate page count). Viewing the search engine as a compressor
and using the NCD formula this gives many new applications. 

The theory of information distance for multisets insofar it was not
treated in \cite{Li08} was given in \cite{Vi11}. 
In \cite{CV13} the $E_{\max}$ distance of nonempty finite multisets was normalized 
and approximated by real-world compressors. The result is the {\em normalized
compression distance} (NCD) for multisets. The $NCD(X)$ where $X$ is
a multiset is shown to be a metric with values in between 0 and 1. 
The developed theory was applied to classification questions concerning
the fate of retinal progenitor cells, synthetic versions, organelle transport,
and handwritten character recognition (a problem in OCR). In all cases
the results were significantly better than using the pairwise NCD except
for the OCR problem where a combination of the two approaches gave 99.6\%
correct on MNIST data. The current state of the art classifier 
for MNIST data achieves 99.77\% accuracy. 

\section{Results}
We translate the NCD approach for multisets in \cite{CV13}
to the relative semantics case investigated for the relative semantics of
pairs of words (or phrases) in \cite{CV07}. This gives the relative 
semantics of a multiset of names (or phrases).
The basic concepts like the Google distribution
and the Google code are given in Section~\ref{sect.google}.
In Section~\ref{sect.theory} we give the relevant properties 
like non-metricity. We show that the closer the
google probability of a multiset approximates the universal
probability of that multiset the closer the NGD approximates
a normalized form of the information distance, with equality of the
latter for equality of the former.
(A normalized form of information distance 
quantizes the common similarity 
of the objects in the multiset according to all effective properties.)
We subsequently apply the NGD for multisets to various data sets,
colors versus animals, saltwater fish versus freshwater fish,
taxonomy of reptiles, mammals, birds, fish, and US Primary
candidates in Section~\ref{sect.appl}. Here we compare the 
outcomes with those of the pairwise NGD in \cite{CV07} using
different search engines Google, Bing, and 
the Google n-gram
data base. We show that the multiset NGD is as good or superior
on these examples, except sometimes for Bing (the 
Google n-gram method did not currently work for the multiset NGD
since Google supplies n-grams for n is at most 5 and the multisets
inspected have too large cardinality).

\subsection{Strings}
We write {\em string} to mean a finite binary string,
and $\epsilon$ denotes the empty string.
(If the string is over
a larger finite alphabet we recode it into binary.)
The {\em length} of a string $x$ (the number of bits in it)
is denoted by $|x|$. Thus,
$|\epsilon| = 0$.

\subsection{Sets, Multisets, and Kolmogorov Complexity}\label{sect.set} 
A {\em multiset} is a generalization of the notion of set.
The members are allowed to appear more than once.
For example, if $x\neq y$ then $\{x,y\}$ is a set, but $\{x,x,y\}$
and $\{x,x,x,y,y\}$ are multisets, with abuse of the set notation.
We also abuse the
set-membership notation by using it
for multisets by writing $x \in \{x,x,y\}$ and $z \not\in \{x,x,y\}$
for $z \neq x,y$. Further, $\{x,x,y\} \setminus \{x\} = \{x,y\}$. 
If $X,Y$ are multisets then we use the notation $XY=X \bigcup Y$;
if $Z$ is a nonempty multisets and $X=YZ$, then
we write $Y \subset X$.
For us, a multiset is finite and nonempty such as
$\{x_1 , \ldots ,x_n\}$ with $0 < n < \infty$ and the members are
finite binary strings in
length-increasing lexicographic order. If $X$ is a multiset, then some
or all of its elements may be equal. $x_i \in X$ means
that ``$x_i$ is an element
of multiset $X$.''
With $\{x_1 ,\ldots , x_n\} \setminus \{x\}$ we mean the multiset
$\{x_1 \ldots x_n\}$
with one occurrence of $x$ removed.

The finite binary strings, finiteness,
and length-increasing lexicographic order allows us to assign
a unique Kolmogorov complexity to a multiset.
The conditional prefix Kolmogorov complexity $K(X|x)$ of a multiset
$X$ given an element $x$ is the length of a shortest program $p$
for the reference universal Turing machine that with input $x$
outputs the multiset $X$. The prefix Kolmogorov complexity $K(X)$ of a multiset
$X$ is defined by $K(X| \epsilon )$.
One can also put multisets in the conditional such as $K(x|X)$ or
$K(X|Y)$.
We will use the straightforward laws $K(\cdot|X,x)=K(\cdot|X)$
and $K(X|x)=K(X'|x)$ up to an additive constant term, for $x \in X$
and $X'$ equals the multiset $X$ with one occurrence of the element $x$ deleted.

\section{Google Distribution and Google Code}
\label{sect.google}
We repeat some relevant text from \cite{CV07} since it is as true 
now as it was then.
\begin{quotation}
The number of web pages currently indexed by Google is approaching
$10^{11}$. Every common search term occurs in millions of web pages.
This number is so vast, and the number of web authors generating
web pages is so enormous (and can be assumed to be a
truly representative very large sample from humankind),
that the probabilities of Google search terms, conceived as
the frequencies of page counts returned by Google divided by
the number of pages indexed by Google (multiplied by the average number
of search terms in those pages),
approximate the actual relative frequencies of those search terms
as actually used in society. Based on this premise,
the theory we develop in this paper states
that the relations represented by the
normalized Google distance or NGD \eqref{eq.NGD}
approximately capture
the assumed true semantic
relations governing the search terms.
The NGD formula 
only uses the probabilities of search terms
extracted from the text corpus in question. We use the World Wide Web
and Google. The same method may be used with other text corpora
like Wikipedia, the King James version of the
Bible or the Oxford English Dictionary and frequency count
extractors, or the World Wide Web again and Yahoo as frequency
count extractor. In these cases one obtains a text corpus and
frequency extractor based semantics of the search terms.
To obtain the true relative frequencies of words and phrases
in society is a major problem in applied linguistic research.
This requires analyzing representative random samples of sufficient
sizes. The question of how to sample randomly and
representative is a continuous source of debate.
Our contention that the World Wide Web is such a large and diverse text corpus,
and Google such an able extractor, that the relative
page counts approximate the true societal word- and phrases
usage, starts to be supported by current real linguistics research.
\end{quotation}

\subsection{The Google Distribution:}
Let the set of singleton {\em Google search terms}
be denoted by ${\cal S}$ and $s=|{\cal S}|$.
If a set search term has $n$ singleton search terms then there are
${s \choose n}$ such set search terms.
\begin{remark}
\rm
There are
${ s  \choose n}$ ($1 \leq n \leq s$)  set search terms 
consisting of $n$ non-identical terms and hence \[
\sum_{1 \leq n \leq s} { s  \choose n} = 2^{s}-1
\]
set search terms altogether. 
\end{remark}
However, for practical reasons mentioned in the opening paragraph
of Section~\ref{sect.gsd} we use multisets instead of sets.
\begin{definition}
\rm
Let $X$ be a multiset of search terms defined by 
$X=\{x_1, \ldots, x_n\}$ with
$x_i \in {\cal S}$ for $1 \leq i \leq n$,
and ${\cal X}$ be the set of such $X$.
\end{definition}
Let the set of web pages indexed (possible of being returned)
by Google be $\Omega$. The cardinality of $\Omega$ is denoted
by $M=|\Omega|$, and at the time of this writing we estimate 
$25 \cdot 10^{9} \leq M \leq 100 \cdot 10^{9}$. (Google 
does not anymore report the number of web pages indexed. Searching for
common words like ``the'' or ``a'' one gets a lower bound on this
number.) 
A subset of $\Omega$
is called an {\em event}. Every search term $X$ usable by Google
defines a {\em Google event} ${\bf X} \subseteq \Omega$ of web pages
that contain an occurrence of $X$ and are returned by Google
if we do a search for $X$.
If $X=\{x_1, \ldots , x_n\}$, then ${\bf X}={\bf x}_1 \bigcap
\cdots \bigcap {\bf x}_n 
\subseteq \Omega$ is the set of web pages returned by Google
if we do a search for pages containing the search terms $x_1$ through
$x_n$ simultaneously. (But see the caveat in opening paragraph 
of Section~\ref{sect.gsd}.)
\begin{remark}
\rm
We can also define the other Boolean combinations: $\neg {\bf x}=
\Omega \backslash {\bf x}$ and ${\bf x}_1 \bigcup \cdots \bigcup {\bf x}_n =
\neg ( \neg {\bf x}_1 \bigcap \cdots \neg {\bf x}_n)$.
For example, let ${\bf e}$ is the event obtained from the basic events ${\bf x}_1, {\bf x}_2,
\ldots$, corresponding to basic search terms $x_1,x_2, \ldots$,
by finitely many applications of the Boolean operations.
\end{remark}

\subsection{Google Semantics}
Google events capture in a particular sense
all background knowledge about the search terms concerned available
(to Google) on the web.
\begin{quote}
The Google event ${\bf x}$, consisting of the set of
all web pages containing one or more occurrences of the search term $x$,
thus embodies, in every possible sense, all direct context
in which $x$ occurs on the web. This constitutes the Google semantics
of the term $x$.
\end{quote}
\begin{remark}
\rm
It is of course possible that parts of
this direct contextual material link to other web pages in which $x$ does not
occur and thereby supply additional context. In our approach this indirect
context is ignored. Nonetheless, indirect context may be important and
future refinements of the method may take it into account.
\end{remark}
\subsection{The Google Code}
The event ${\bf X}$ consists of all
possible direct knowledge on the web regarding $X$.
Therefore, it is natural
to consider code words for those events
as coding this background knowledge. However,
events of singleton search terms may overlap 
and hence the summed probabilities based on page counts divided by the
total number of pages exceeds 1.
The reason is that,
for example, the search terms ``cats'' and ``dogs'' often
will occur on the same page.
By the Kraft-McMillan inequality \cite{Kr49,Mc56} this prevents a
corresponding set of uniquely decodable code-word lengths. Namely
to be uniquely decodable the binary code words $c_1,c_2, \ldots$ must satisfy
\[
\sum_i 2^{-|c_i|} \leq 1.
\]
Note that this overlap phenomenon only concerns the singleton search terms.
Multiple search terms based on Boolean combinations of 
singleton search terms in a query return a page count based
on the same Boolean combination of singleton search terms events. 

The solution to the overlap problem is to count every
page as the number of search terms that occur in it. 
For example, page $p \in \Omega$ contains $p_s$ singleton search terms
one or more times. Then,
the sum of those counts is
\[
N = \sum_{p \in \Omega} p_s = \sum_{x \in S} |{\bf x}|.
\]
Since every web page that is indexed by Google contains at least
one occurrence of a search term, we have $N \geq M$. On the other hand,
web pages contain on average not more than a certain constant $\alpha$
search terms. Therefore, $N \leq \alpha M$.

\begin{definition}
\rm
For multisearch term $X$ corresponding with event ${\bf X}$
define the {\em Google distribution} by
\begin{align}\label{eq.gpmf}
g(X)  =|{\bf X}|/N. 
\end{align}
\end{definition}
\begin{remark}
\rm
Possibly search terms $X \neq Y$ but ${\bf X}={\bf Y}$.
In this case $|{\bf X}|=|{\bf Y}|=|{\bf X} \bigcap {\bf Y}|$
and $NGD(X,Y)=0$ according to the later formula \eqref{eq.NGD}. 
This case is treated in Lemma~\ref{lem.0}.
In this case the NGD formula views search multisets $X$ and $Y$
as being identical. This is not correct but a quirk of the formalism
that, luckily, does not occur frequent.
\end{remark}
With $\alpha$ large enough we have 
$\sum_{X \in {\cal X}} g(X) \leq 1$.
This $g$-distribution changes over time,
and between different samplings
from the distribution. But let us imagine that $g$ holds
in the sense of an instantaneous snapshot. The real situation
will be an approximation of this.
Given the Google machinery, these are absolute probabilities
which allow us to define the associated prefix code-word 
lengths (information contents) equal to unique decodable code words
length \cite{Mc56} for the multiset search terms.
\begin{definition}\label{def.g}
\rm
The length $G(x)$ of the {\em Google code} for a multiset search 
term $X$ is defined by (rounded upwards)
\begin{align}\label{eq.gcc}
G(X)= \log 1/g(X) ,
\end{align}
or $\infty$ for $g(X)=0$.
The case $|X|=1$ gives the length of the 
Google code for the singleton search terms.
This is the length of the Shannon-Fano code \cite{Sh48}.
\end{definition}

\subsection{The Google Similarity Distance}\label{sect.gsd}
In the information distance in \cite{BGLVZ98} 
one uses objects which are computer
files that carry their meaning within. It makes sense for $X$
to be a multiset of such objects. But
in the Google case it appears to make no sense to have 
multiset occurrences of
the same singleton search term in a query. Yet the frequency count
returned by Google for ``red'' was on July 24, 2013, 4.260.000.000,
for ``red red''  5.500.000.000, for ``red red red''  5.510.000.000.
The count keeps rising with more occurrences of the same term
until say 10 to 15 occurrences. Because of this perceived quirk
we use multisets $X$ instead of restricting $X$ to be a set. 

The information distance is expressed in terms of prefix Kolmogorov complexity.
One approximates the prefix Kolmogorov complexity from above by a real-world
compressor $Z$. For every multiset $X$ of computer files we have
$K(X) \leq Z(X)$ where $Z(X)$ is the length of the compressed version of $X$ 
using the compressor $Z$. See \cite{Li03,CV04,CV13}. 

We have seen that a multiset search
term  (a.k.a. a query) $X$ 
induces a probability $g(X)$ and similarly a
Google code of length 
$G(X)=\log 1/g(X)$. 
In this sense we see Google as a compressor, say $G$,
for the Google semantics associated with the search terms.
Replace the compressor $Z$ by Google viewed as compressor $G$.
\begin{remark}
\rm
Using $\max_{x \in X}\{G(X \setminus \{x\})$ in the denominator of
\eqref{eq.NGD} instead of $\max_{x \in X}\{G(x)\}$ is a valid choice.
Denote the result of using the denominator mentioned first as $NGD'(X)$.
In this case Lemma~\ref{lem.0} would state that the range of $NGD'(X)$
is from 0 to 1.  
As $|X|$ rises the value of $NGD'(X)$ tends to 1, similar to
the case of the multiset NCD in \cite{CV13}. 
That is, the numerator
tends to become equal to the denominator. In the applications
Section~\ref{sect.appl} we want to classify $x$ in one of the classes
$X_i$ ($i \in I$ where $I$ is small). To this purpose, 
we would look in the $NGD'(X)$ case for the $i$ that
minimizes  $NGD'(X_i\bigcup \{x\})-NGD'(X_i)$. This difference tends
to become smaller when $|X_i|$ rises. For $|X_i| \rightarrow \infty$
the difference goes to 0. Hence we choose as denominator 
$\max_{x \in X}\{G(x)\}$. 

The information distance is universal (\cite{BGLVZ98} for pairs and 
\cite{Vi11} for multisets) in the sense that
every admissible distance of $X$ (every such distance quantifies a 
shared property of the elements
of $X$ satisfying certain mild restrictions) is at least as large 
as that numerator. The information distance is an admissible distance. Therefore
the numerator is called {\em universal} \cite{Vi11}. 
Dividing it by a normalizing 
factor we quantify an optimal lower bound on the relative 
distance between the elements with respect to the normalizing factor. 
In \cite{CV13} the normalizing factor is
used to express this similarity on a scale of 0 to 1. For the multiset NGD
we use the normalizing factor to express
the similarity among the elements of $X$ on a scale from 0 to $|X|-1$,
Lemma~\ref{lem.0}. This is 
the shared similarity among all search terms in $X$
or relative semantics that all search terms in $X$ have in common.
The normalizing term $\max_{x \in X}\{G(x)\}$ yields \eqref{eq.NGD} 
coinciding with \eqref{eq.ngd} for the case $|X|=2$, as does setting
the denominator to $\max_{x \in X}\{G(X \setminus \{x\})\}$. However,
the choice
of $\max_{x \in X}\{G(x)\}$ works better for 
the applications (Section~\ref{sect.appl}).
\end{remark}
\begin{definition}
\rm
Define the {\em normalized Google distance} (NGD) by
\begin{eqnarray}\label{eq.NGD}
 NGD(X)& =&\frac{G(X) - \min_{x \in X}\{G(x)\}}
{\max_{x \in X}\{G(x)\}}
\\&=&  \frac{ \max_{x \in X} \{\log f(x)\}  - \log f(X)}{
\log N - \min_{x \in X}\{\log f(x)\}},
\nonumber
\end{eqnarray}
where $f(x) = |{\bf x}|$ denotes the number of pages (frequency) 
containing $x$, and $f(X)=|{\bf X}|$
denotes the number of pages (frequency) 
containing all elements (search terms)
of $X$, as reported by Google.
\end{definition}
The second equality in \eqref{eq.NGD}, 
expressing the NGD in terms of frequencies,
is seen as follows. We use \eqref{eq.gpmf} and \eqref{eq.gcc}. 
The numerator is rewritten by $G(X)=\log 1/g(X)=\log (N/f(X))=
\log N - \log f(X)$ and $\min_{x \in X}\{G(x)\}=
\min_{x \in X}\{ \log 1/g(x)\} 
= \log N - \max_{x \in X}\{\log f(x)\}$.   
The denominator is rewritten as $\max_{x \in X}\{G(x)\}
= \max_{x \in X}\{\log 1/g(x)\}
= \log N - \min_{x \in X}\{\log f(x)\}$.
For $X'=X \bigcup \{x\}$ the denominator of $NGD(X')$ is unchanged for 
$G(x) \leq \max_{y \in X}\{G(y)\}$, while it becomes larger otherwise.  
The last case means that $f(x) < \min_{y \in X} \{f(y)\}$, that is,
$x$ is more obscure than any $y \in X$. As $|X|$ rises this 
becomes more unlikely. 

\begin{remark}
\rm
In practice, use the page counts
returned by Google for the frequencies, and we have to
choose $N$.  From the right-hand side term in \eqref{eq.NGD}
it is apparent that by increasing $N$ we decrease the NGD , everything gets 
closer together, and
by decreasing $N$ we increase the NGD, everything gets further apart.
Our experiments suggest that every reasonable
($M$ or a value greater than any $f(x)$ in the formula) value can be used as
normalizing factor  $N$,
and our
results seem  in general insensitive to this choice.  In our software, this
parameter $N$ can be adjusted as appropriate, and we often use $M$ for $N$.
\end{remark}

\section{Theory}\label{sect.theory}
The following two lemmas state properties
of the NGD (as long as we choose parameter $N \geq M$). The proofs 
follow directly from the formula \eqref{eq.NGD}.
\begin{lemma}\label{lem.0}
Let $X$ be a multiset.
The {\em range} of the $NGD(X)$ is in between 0 and $|X|-1$ with the 
caveats in the proof. The upper bound states that $NGD(X)$ is at most
the number of elements of $X$ minus 1.
\end{lemma}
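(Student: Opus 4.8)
The plan is to read everything off the frequency form of the distance in \eqref{eq.NGD},
\[
NGD(X)=\frac{\max_{x\in X}\{\log f(x)\}-\log f(X)}{\log N-\min_{x\in X}\{\log f(x)\}},
\]
and to record along the way the mild hypotheses that each step needs; these are the ``caveats'' referred to in the statement. As a preliminary I would note that the formula only has content when $f(x)>0$ for every $x\in X$ and $f(X)>0$ (otherwise a Google code length is $\infty$), and that the denominator is strictly positive: $f(x)=|{\bf x}|\le M\le N$ for every $x$, so $\min_{x\in X}\{\log f(x)\}\le\log N$, with equality only in the degenerate situation $N=M$ together with some singleton term of $X$ occurring on \emph{every} indexed page. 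I exclude these cases.

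For the lower bound $NGD(X)\ge 0$ I would use the containment ${\bf X}={\bf x}_1\cap\cdots\cap{\bf x}_n\subseteq{\bf x}_i$, valid for each $i$, so that $f(X)=|{\bf X}|\le|{\bf x}_i|=f(x_i)$ and hence $f(X)\le\max_{x\in X}f(x)$, i.e.\ $\log f(X)\le\max_{x\in X}\{\log f(x)\}$; thus the numerator is nonnegative. The same containment pins down the boundary case: $NGD(X)=0$ forces $f(x_i)=f(X)=\max_{x\in X}f(x)$ for every $i$, hence $|{\bf x}_i|=|{\bf X}|$ with ${\bf X}\subseteq{\bf x}_i$, so all the events ${\bf x}_1,\dots,{\bf x}_n$ coincide. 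This is exactly the quirk flagged in the remark preceding Lemma~\ref{lem.0}, where distinct search multisets with the same Google event are assigned distance $0$.

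For the upper bound set $n=|X|$; the case $n=1$ is immediate since then the numerator vanishes identically, so assume $n\ge 2$. Since $f(X)\ge 1$ we have $\log f(X)\ge 0$, so the numerator is at most $\max_{x\in X}\{\log f(x)\}$, and $NGD(X)\le n-1$ will follow from $\max_{x\in X}\{\log f(x)\}\le(n-1)\bigl(\log N-\min_{x\in X}\{\log f(x)\}\bigr)$, which after exponentiation reads $\max_{x\in X}f(x)\cdot\bigl(\min_{x\in X}f(x)\bigr)^{n-1}\le N^{\,n-1}$. Using $\max_{x\in X}f(x)\le M\le N$ once more, this reduces to $\bigl(\min_{x\in X}f(x)\bigr)^{n-1}\le N^{\,n-2}$, i.e.\ to the weak requirement $\min_{x\in X}f(x)\le N^{(n-2)/(n-1)}$ --- automatic for realistic data once $n$ is moderately large. (Retaining the $\log f(X)$ term instead of discarding it, the sharp condition is $\max_{x\in X}f(x)\cdot(\min_{x\in X}f(x))^{n-1}\le N^{\,n-1}f(X)$, which fails only when the terms of $X$ are jointly far rarer than their individual frequencies suggest.) The one genuinely tight case is $n=2$, where the weak requirement degenerates to ``the rarer term occurs on at most one page''; there I keep $f(x,y)$, and $NGD(\{x,y\})\le 1$ becomes $f(x)f(y)\le N\,f(x,y)$, i.e.\ $x$ and $y$ co-occur at least as often as under independence --- precisely the caveat already noted for the pairwise NGD in \cite{CV07}.

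The upshot is that no hard inequality is involved; the work is in delimiting the exceptions. The bound $NGD(X)\le|X|-1$ can fail only when $f(X)=0$ (numerator $+\infty$), when $\min_{x\in X}f(x)$ is pathologically close to $N$ (which also threatens positivity of the denominator), or --- for two-element multisets --- when the two terms are strongly anticorrelated on the web; outside these, the argument above yields the claimed range $[0,|X|-1]$. I expect the main obstacle to be phrasing these caveats cleanly enough to be both honest and unobtrusive.
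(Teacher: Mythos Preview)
Your argument is correct and follows the same skeleton as the paper's: the lower bound comes from $f(X)\le f(x)$ for each $x\in X$ together with $N\ge M$, and the upper bound is not a hard inequality but a description of the caveats under which it holds (chiefly $f(X)\ge 1$).

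The one noteworthy difference is in how the upper-bound condition is packaged. The paper rewrites the target inequality in terms of the Google code $G$ as $G(X)-\min_{x\in X}G(x)\le(|X|-1)\max_{x\in X}G(x)$ and then appeals to the subadditivity $G(X)\le G(x_1)+\cdots+G(x_n)$, which in frequency form is $\prod_i f(x_i)\le N^{\,n-1}f(X)$. You instead manipulate the frequency inequality directly and obtain the sharp condition $\max_{x}f(x)\cdot(\min_{x}f(x))^{n-1}\le N^{\,n-1}f(X)$. Since $\max_x f(x)\cdot(\min_x f(x))^{n-1}\le\prod_i f(x_i)$, your condition is strictly weaker than the paper's, so you have actually delimited the exceptions more tightly; the paper's subadditivity formulation, on the other hand, is a cleaner mnemonic (``Google code lengths are subadditive when the terms are not negatively correlated'') and ties in with the compressor viewpoint. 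Your discussion of the $n=2$ boundary case, reducing to $f(x)f(y)\le N\,f(x,y)$, recovers exactly the positive-correlation caveat already familiar from the pairwise NGD.
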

\begin{proof}
Clearly, $f(x) \geq f(X)$ for $x \in X$. Hence the numerator of
\eqref{eq.NGD} is nonnegative. By assumption above $N \geq M$. Hence
the NGD is nonnegative.
If $\max_{x \in X} \{\log f(x)\} = \log f(X)$ then $NGD(X)=0$ according
to \eqref{eq.NGD}. For example, let
$X=\{x,y\}$ and choose $x \neq y$
with ${\bf x}={\bf y}$. Then,
$f(x)=f(y)=f(x,y)$ and $NGD(x,y)=0$.
(In the practice of Google we have to deal with the effect mentioned in
beginning of Section~\ref{sect.gsd}. If for instance
$X=\{\mbox{red}, \mbox{red}, \mbox{red}\}$, then the numerator
of $NGD(X)$ equals $4.260.000.000- 5.510.000.000 = -1.250.000.000$.)

To obtain the upper bound
of $|X|-1$ we have 
$\log N - \min_{x \in X} \{\log f(x)\} \geq
(|X|-1)(\max_{x \in X} \{\log f(x)\}- \log f(X))$.
This implies $f(X) \geq 1$ since otherwise the right-hand side is $\infty$. 
Equivalently, $G(X)-\min_{x \in X}\{G(x)\} \leq (|X|-1)\max_{x \in X}\{G(x)\}$
and $G(X) < \infty$. This holds since if $X=\{x_1, \ldots, x_n\}$
then $G(X) \leq G(x_1) \ldots G(x_n)$. 
Equality is reached for $X$ consists of elements such that 
compressing them together results in no reduction of length over
compressing them apart while the compressed version of every element 
has the length of the maximum.
\end{proof}

We achieve $f(X)=0$  in case $\{x,y\} \subseteq X$ and $x$ and $y$
never occur on the same web pages
while both do occur on different pages. Then the numerator of $NGD(X)=\infty$. 
(In \cite{CV07} the upper bound of the range of the pairwise NGD ($|X|=2$) is
stated to be $\infty$. However, it is 1 with above restrictions.) 

\begin{lemma}
If for some $x$ we have that the frequency  $f(x)=0$,
then for every set $X$ containing $x$ as an element we have $f(X)=0$,
and the $NGD(X)= \infty / \infty$,
which we take to be $|X|-1$.
\end{lemma}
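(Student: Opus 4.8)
The plan is to handle the two assertions separately: the first is a one-line observation about Google events, and the second amounts to reading the $\pm\infty$ conventions through the formula \eqref{eq.NGD} and then fixing a value for the resulting indeterminate form. First I would recall that $f(x)=|{\bf x}|$, so $f(x)=0$ forces ${\bf x}=\emptyset$. Taking any multiset $X=\{x_1,\dots,x_n\}$ with $x\in X$, we have $x=x_i$ for some $i$, and since ${\bf X}={\bf x}_1\bigcap\cdots\bigcap{\bf x}_n\subseteq{\bf x}_i={\bf x}=\emptyset$, it follows that ${\bf X}=\emptyset$ and hence $f(X)=|{\bf X}|=0$. This settles the first claim.

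For the second claim I would substitute into \eqref{eq.NGD} in the Google-code form $NGD(X)=(G(X)-\min_{y\in X}\{G(y)\})/\max_{y\in X}\{G(y)\}$. By \eqref{eq.gpmf}, $g(X)=f(X)/N=0$, so $G(X)=\log 1/g(X)=\infty$ by Definition~\ref{def.g}; likewise $g(x)=0$ gives $G(x)=\infty$, and since $x\in X$ this already forces $\max_{y\in X}\{G(y)\}=\infty$, i.e. the denominator is $\infty$. The numerator $G(X)-\min_{y\in X}\{G(y)\}$ equals $\infty$ whenever some element of $X$ still has a finite Google code, and is the indeterminate $\infty-\infty$ in the degenerate case that every element of $X$ has zero frequency; either way the formula presents itself as $\infty/\infty$. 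In the frequency form one reads the same thing as $(\max_{y\in X}\{\log f(y)\}-\log f(X))/(\log N-\min_{y\in X}\{\log f(y)\})$ with $\log f(X)=\log f(x)=-\infty$ appearing both in the numerator (as $-\log f(X)=+\infty$) and in the denominator (as $-\min_{y\in X}\{\log f(y)\}=+\infty$).

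Finally I would fix the convention and argue it is the natural one. Since $NGD$ measures dissimilarity and, by Lemma~\ref{lem.0}, takes values in $[0,|X|-1]$ with $|X|-1$ meaning ``as unlike as a multiset of $|X|$ elements can be,'' the sensible value to attach to this $\infty/\infty$ is the top of that range: a search term occurring on no indexed page carries no Google semantics at all and therefore shares nothing with the remaining terms. Hence we declare $NGD(X)=\infty/\infty:=|X|-1$, which also keeps the present lemma consistent with the upper bound of Lemma~\ref{lem.0} and with the pairwise value $1$ when $|X|=2$. I do not expect a genuine obstacle here—the substance is the event inclusion of the first paragraph together with a coherent reading of $\pm\infty$; the only mild subtlety is the all-zero-frequency corner, where the numerator is itself $\infty-\infty$ rather than plainly $\infty$, but that case is a vacuous artifact of the formalism and is subsumed under the same convention.
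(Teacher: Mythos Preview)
Your proposal is correct and follows exactly the route the paper indicates: the paper gives no separate proof for this lemma, stating only that ``the proofs follow directly from the formula \eqref{eq.NGD},'' and your argument is precisely that---the event inclusion ${\bf X}\subseteq{\bf x}=\emptyset$ for the first claim and a direct reading of $\pm\infty$ through \eqref{eq.NGD} for the second. If anything, your write-up is more careful than the paper, which simply asserts $\infty/\infty$ without flagging the all-zero-frequency corner where the numerator is itself $\infty-\infty$.
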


Metricity of distances $NID(X)$
and $NCD(X)$ of multisets $X$
are defined and treated in \cite{Vi11,CV13}. 
For every multiset $X$ we have that the $NGD(X)$ is invariant under 
permutation of $X$: it is symmetric. 
But the NGD distance is not a metric. 
Metricity is already violated for multisets of
cardinality two since $NGD(x,y)$ can be 0 with $x \neq y$. 
Hence there is no need to repeat the general metric axioms 
for nonempty finite multisets of more then two elements.
\begin{theorem}\label{theo.triangle}
The NGD distance is not a metric.
\end{theorem}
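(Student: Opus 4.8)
The plan is to exhibit explicit violations of the metric axioms, which is in fact already half-done in the surrounding text. The key observation is that the paper has recorded (Lemma~\ref{lem.0} and its proof) that $NGD(X)=0$ is possible for a two-element multiset $X=\{x,y\}$ with $x\neq y$, namely whenever the Google events coincide, ${\bf x}={\bf y}$, so that $f(x)=f(y)=f(x,y)$. Since a metric $d$ must satisfy $d(x,y)=0 \iff x=y$, and here we have distinct search terms at distance $0$, the identity-of-indiscernibles axiom fails. This already suffices, since the statement (per the remark preceding the theorem) is that metricity is violated already at cardinality two, so no axioms for larger multisets need be invoked.

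Concretely, I would first recall the generalized metric axioms for multisets as used in \cite{Vi11,CV13} — symmetry under permutation, the property $NGD(X)=0$ iff all elements of $X$ are equal, and a triangle-type inequality — and note that symmetry does hold (as the text observes, $NGD(X)$ is permutation-invariant). Then I would point to the failure of the ``positive definiteness'' axiom: take any two distinct search terms $x,y$ with identical Google events (the existence of such pairs is an empirical fact about Google, e.g.\ synonymous or near-synonymous queries, or a query and a Boolean-equivalent reformulation), giving $NGD(\{x,y\})=0$ while $x\neq y$. For a self-contained flavor one could instead lean on the ``red red red'' phenomenon discussed in Section~\ref{sect.gsd}: there the numerator of $NGD$ can even be negative, which a fortiori is incompatible with the nonnegativity-plus-definiteness a metric requires.

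The only mildly delicate point — the ``main obstacle,'' such as it is — is being careful about what ``metric on multisets'' means, since the domain is multisets rather than points of a set, so one should state the axiom in the form used in the cited papers (distance is $0$ exactly when the multiset is ``diagonal,'' i.e.\ all its members coincide) rather than the naive pairwise form. Once that is pinned down, the counterexample $\{x,y\}$ with ${\bf x}={\bf y}$, $x\neq y$ is immediate from \eqref{eq.NGD}. I would close by remarking that this is the same obstruction already present for the pairwise NGD of \cite{CV07}, so non-metricity is inherited rather than newly introduced by the multiset generalization, and that — as the Remark after Definition of $g(X)$ notes — this quirk is rare in practice and does not impair the applications.

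I expect the proof the authors give to be essentially a one-line pointer back to Lemma~\ref{lem.0}: ``$NGD(x,y)=0$ for some $x\neq y$, violating the identity axiom of a metric, and since this already fails at cardinality two there is nothing further to check.''
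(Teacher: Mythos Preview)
Your argument is correct as a proof of the theorem: exhibiting $x\neq y$ with ${\bf x}={\bf y}$ and hence $NGD(x,y)=0$ does violate the identity-of-indiscernibles axiom, and that alone suffices for non-metricity. The paper's proof begins exactly this way, citing Lemma~\ref{lem.0}.

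However, your prediction that the paper stops there is wrong. The paper goes on to show that the triangle inequality \emph{also} fails, via an explicit construction: take events ${\bf z}={\bf x}\cup{\bf y}$ with $|{\bf x}\cap{\bf y}|=1$, ${\bf x}={\bf x}\cap{\bf z}$, ${\bf y}={\bf y}\cap{\bf z}$, and $|{\bf x}|=|{\bf y}|=\sqrt{N}$. Then $f(x)=f(y)=f(x,z)=f(y,z)=\sqrt{N}$, $f(z)=2\sqrt{N}-1$, $f(x,y)=1$, which gives $NGD(x,y)=1$ while $NGD(x,z)=NGD(z,y)\approx 2/\log N$, so $NGD(x,y)>NGD(x,z)+NGD(z,y)$ once $N>16$. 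This is a stronger and more informative result than yours: the identity-axiom failure could be dismissed as a formal artifact of how search terms are individuated (distinct names with identical events), whereas the triangle-inequality failure is a structural feature of the formula itself and cannot be repaired by quotienting out terms with equal events. The paper even follows with a Remark noting that, unlike the identity failure, no \emph{empirical} triangle violation was found in practice --- a distinction your proof cannot draw since you only treat one axiom.
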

\begin{proof}
By Lemma~\ref{lem.0}  we have
$NGD(X) \not> 0$ for every $X$ consisting of unequal elements. 
Nor does the NGD satisfy the triangle inequality. Namely,
$NGD(x,y) \not\leq 
NGD(x,z)+NGD(z,y)$ 
for some $x,y,z$. For example,
choose ${\bf z}= {\bf x} \bigcup {\bf y}$,
$|{\bf x} \bigcap {\bf y}|= 1$,
${\bf x}={\bf x} \bigcap {\bf z}$,
${\bf y}={\bf y} \bigcap {\bf z}$,
and $|{\bf x}|=|{\bf y}|= \sqrt{N}$.
Then, $f(x)=f(y)=f(x,z)=f(y,z)=\sqrt{N}$, $f(z)=2\sqrt{N} -1$, 
and $f(x,y)=1$.
This yields $NGD(x,y)= 1$ and
$NGD(x,z)=
NGD(z,y) \approx 2/ \log N$, 
which violates the triangle
inequality for all $N > 16$. 
\end{proof}

\begin{remark}
\rm
In practice for the combination of the World Wide Web and Google
the $NGD$ may satisfy the triangle inequality. We did not
find a counterexample.
\end{remark}

For definitions and results about Kolmogorov complexity and the 
universal distribution (universal probability mass function)
see the text \cite{LV08} and for the Kolmogorov complexity of multisets see
Section~\ref{sect.set}. 

The information distance between strings $x$ and $y$ is 
$\max\{K(x|y),K(y|x)\}$, \cite{BGLVZ98}. The {\em symmetry of information}
states that $K(x,y)=K(x)+K(y|x)$ with a logarithmic additive term \cite{Ga74}.
Ignoring this logarithmic additive term, 
the information distance equals $K(x,y)-\min{K(x),K(y)}$.
Similarly, the information distance (diameter)
of a multiset $X$ is $K(X)-\min_{x \in X}\{K(x)\}$.
In \cite{Vi11} it is proven that this generalization retains 
the same properties.
Normalizing this information distance similar to \eqref{eq.NGD}
yields
\begin{equation}\label{eq.nid}
\frac{K(X)-\min_{x \in X}\{K(x)\}}
{\max_{x \in X} \{K(x)\}}.
\end{equation}

\begin{theorem}\label{theo.ideal}
Let $X$ be a nonempty finite multiset of search terms.
The closer $g(X)$ approximates ${\bf m}({\bf X})$ 
the better \eqref{eq.NGD} approximates \eqref{eq.nid}
with equality for $g(X) = {\bf m}({\bf X})$.
In this sense $NGD(X)$ approximates
the shared similarity among all search terms in $X$
or relative semantics that all search terms in $X$ have in common. 
\end{theorem}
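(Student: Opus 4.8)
The plan is to route everything through Levin's Coding Theorem, the standard bridge between the universal distribution and prefix complexity: for a fixed encoding of finite objects as strings one has $-\log \mathbf{m}(\mathbf{Y}) = K(\mathbf{Y}) + O(1)$ (see \cite{LV08}), and in particular this holds for the Google events $\mathbf{Y}$ we are coding, namely the singleton events $\mathbf{x}$ and the multiset event $\mathbf{X}$. On the Google side the code length is, by Definition~\ref{def.g}, nothing but the information content of the probability mass function $g$, i.e. $G(Y) = \log 1/g(Y)$ for every multiset search term $Y$; recall that $g$ is a sub-probability distribution (the displayed bound $\sum_{X \in \mathcal{X}} g(X) \le 1$ for $\alpha$ large), so the Kraft inequality holds and $G$ is a genuine prefix-code length. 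The first step is therefore purely algebraic: write \eqref{eq.NGD} as $(G(X) - \min_{x\in X} G(x))/\max_{x\in X} G(x)$ and \eqref{eq.nid} as $(K(X) - \min_{x\in X} K(x))/\max_{x\in X} K(x)$, observe that these are the same rational function of code lengths, and conclude that the two quantities differ only through the discrepancies $G(Y) - K(Y) = \log(\mathbf{m}(\mathbf{Y})/g(Y)) + O(1)$, evaluated at $Y = X$ and at the singletons $Y = x \in X$.

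Second, I would settle the equality case. If $g$ coincides with $\mathbf{m}$ on the relevant events, then $G(Y) = K(Y)$ up to the $O(1)$ slack inherent in the Coding Theorem, so substituting into the numerator, the $\min$-correction and the normalizing denominator carries \eqref{eq.NGD} into \eqref{eq.nid} at the same order of precision at which \eqref{eq.nid} itself is meaningful (the $O(\log)$ terms in \eqref{eq.li08} and in the symmetry of information used to obtain \eqref{eq.nid}). Here I would quote Lemma~\ref{lem.0} with $N \ge M$ and $f(X) \ge 1$ to guarantee $0 < \max_{x\in X} G(x) < \infty$, so that the division is legitimate and no $0/0$ appears; this is exactly the ``caveats in the proof'' clause of that lemma.

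Third, for the monotone ``the closer, the better'' assertion I would measure closeness on the natural information scale, $\delta(Y) = |\log g(Y) - \log \mathbf{m}(\mathbf{Y})|$, so that $|G(Y) - K(Y)| \le \delta(Y) + O(1)$. A routine first-order estimate for a ratio $(a-b)/c$ whose denominator is bounded below then gives that the difference between $NGD(X)$ and the value of \eqref{eq.nid} is at most a fixed positive combination of $\delta(X)$, $\delta(x^{\min})$ and $\delta(x^{\max})$ (where $x^{\min}, x^{\max}$ attain the extrema of $G$ over $X$) plus $O(1)$, all divided by $\max_{x\in X} G(x)$. This bound is non-decreasing in each $\delta$ and vanishes as $g \to \mathbf{m}$ on $\mathbf{X}$ and on the singleton events, which is the monotone statement, its zero-discrepancy specialization being the equality case just proved. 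The closing sentence of the theorem is then interpretation only: the numerator of \eqref{eq.nid} is the universal, hence minimal, admissible distance of $X$ (\cite{Vi11}, recalled in Section~\ref{sect.gsd}), so its normalized form quantifies the similarity common to all search terms of $X$ — their shared relative semantics — and $NGD(X)$ approximates this quantity to within the discrepancies just bounded.

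I expect the main obstacle to be conceptual bookkeeping rather than any hard inequality: one must pin down the identification between the universal probability $\mathbf{m}(\mathbf{X})$ of a Google \emph{event} (a finite subset of $\Omega$) and the complexity $K(X)$ of the multiset search \emph{term} — justified by the paper's own reading of $\mathbf{X}$ as ``all possible direct knowledge on the web regarding $X$'' — and then carry one consistent ``up to $O(1)$, resp. $O(\log)$'' slack simultaneously through the numerator, the $\min$-correction and the denominator, making sure the constants do not silently accumulate across the three places where $G$ is replaced by $K$. Once that is fixed, the perturbation bound on the ratio is immediate, since Lemma~\ref{lem.0} keeps the denominator away from $0$.
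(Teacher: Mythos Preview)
Your proposal is correct and follows essentially the same route as the paper: invoke Levin's Coding Theorem to identify $-\log \mathbf{m}(\mathbf{X})$ with $K(\mathbf{X})+O(1)$, observe that \eqref{eq.NGD} and \eqref{eq.nid} are the same rational function with $G$ replaced by $K$, and then appeal to the universality of $E_{\max}$ from \cite{Vi11} for the semantic interpretation. Your treatment is in fact more explicit than the paper's own argument --- you supply a concrete perturbation bound via $\delta(Y)=|\log g(Y)-\log\mathbf{m}(\mathbf{Y})|$ for the monotone ``closer\dots better'' clause, whereas the paper leaves that step qualitative --- and you correctly identify the event-versus-search-term bookkeeping (the passage from $K(X)$ to $K(\mathbf{X})$) as the only delicate point, which the paper handles by introducing the intermediate display \eqref{eq.NID}.
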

\begin{proof}
The theoretic foundation of \eqref{eq.NGD} lies in the fact that
\eqref{eq.NGD} approximates \eqref{eq.nid} because $G$ is viewed 
as a compressor.
The quantity
$G(X)$ semicomputes the Kolmogorov complexity $K({\bf X})$
from above. 
This means by Definition~\ref{def.g} that 
the Google probability $g(X)$
semicomputes the universal probability ${\bf m}({\bf X})$ 
(the greatest effective 
probability) from below. Namely, by the Coding Lemma 
of L.A. Levin \cite{Le74} (Theorem 4.3.3 in \cite{LV08}),
the negative logarithm of the universal probability ${\bf m}$
equals (up to a constant additive term) the prefix 
Kolmogorov complexity $K$ of the same finite object:
\[
\log 1/{\bf m}({\bf X}) = K({\bf X})+O(1).
\]
From Shannon \cite{Sh48} we know that
the negative logarithm of every probability mass function equals
the code length of a corresponding code up to a constant additive term. 
By the last displayed equation
it turns out that the negative logarithm of the greatest lower 
semicomputable probability mass function equals up to
a constant additive term the least length of an upper semicomputable
prefix code of the same finite object
(the prefix Kolmogorov complexity \cite{LV08}).
 
This means that the closer the Google probability approximates the
universal probability of the corresponding event, 
the better the NGD approximates its theoretical
precursor \eqref{eq.nid}.
The latter is the same  formula as the former
but with the prefix Kolmogorov complexity $K$ instead of the 
Google code $G$. Namely, 
\begin{equation}\label{eq.NID}
 \frac{K({\bf X})-\min_{x \in X}\{K({\bf x})\}}
{\max_{x \in X} \{K({\bf x)\}}},
\end{equation}
is the same as \eqref{eq.nid} except that 
${\bf x}$ means the event corresponding with
the search term $x$. 
The Kolmogorov complexity
$K({\bf X})$ is a lower bound on $Z({\bf X})$ 
for every computable compressor $Z$,
known and unknown alike. Hence also for $G$. 
\begin{remark}\label{rem.nid-ngd}
\rm
Let $X$ be a multiset of finite binary objects rather
than a multiset of the names of finite binary objects.
For $|X| \geq 2$ the formula 
\eqref{eq.nid}, approximates the similarity
all the elements in $X$ have in common.
This is a consequence of the fact that in
Theorem 5.2 of \cite{Vi11} it is proven that 
$E_{\max}(X)=\max_{x\in X} K(X|x)$ (up to a $O(\log K(X))$ additive term)
is universal among the class of 
admissible multiset distance functions.
Here `universal' means it is (i) a member of the class, and (ii)
its distance is at most an additive constant less 
than that of any member in the class. 
Multisets with admissible distances are a very 
wide class of multisets; so wide that 
presumably all distances of multisets one cares about are included. 
\end{remark}

In the Google
setting we replace the argument $X$ in the above 
Remark~\ref{rem.nid-ngd} by the 
event ${\bf X}$.  Formally,
one defines the class of admissible distances in this setting as follows.
A priori we allow asymmetric distances. We
would like to exclude degenerate distance measures 
such as $D({\bf X})=1/2$
for all $X$ containing a fixed element $x$. In general,
for every $d$, we want only finitely many
sets $X \ni x$ such that $D({\bf X}) \leq d$. Exactly how fast we want the
number of sets we admit to go to $\infty$ is not important; it is only a
matter of scaling. 
\begin{definition}
\rm
A distance function $D$ on nonempty finite multisets ${\bf X}$ 
is {\em admissible} if
(i) it is a total, possibly asymmetric, function from ${\bf X}$ 
to the nonnegative real numbers which is 0 if there is an $Y$
with $|Y|=1$ and ${\bf Y}={\bf X}$, and
greater than 0 if all $X$'s corresponding with ${\bf X}$
have $|X| > 2$
(up to a $O(\log K)$ additive term 
with $K=K({\bf X})$), (ii) is upper
semicomputable (see below), and (iii) satisfies
the density requirement for every $x \in \{0,1\}^*$ given by
\[
\sum_{X: x \in X \& D({\bf X})>0} 2^{-D({\bf X})} \leq 1.
\] 
\end{definition}
\begin{remark}
\rm
Here, we consider only distances that are computable in some broad
sense (an upper semicomputable object means
that the object can be computably approximated
from above in the sense that in the limit the object is reached).
In the Google case the distance of $X$ is computable from ${\bf X}$.
See for example \cite{LV08,Vi11}.
\end{remark}
With adapted notation (because we go from compressible strings to 
queries for Google and the corresponding events) 
Theorem 5.2 of \cite{Vi11} states: $E_{\max}({\bf X})$ is admissible
and it is minimal in the sense that for every admissible
multiset distance function $D( {\bf X})$ we have 
$E_{\max}( {\bf X}) \leq D( {\bf X})$ up to
an additive constant term.
\begin{remark}
\rm
For example, assume
that $X=\{x_1, \ldots ,x_n\}$
and we take
a quantifiable feature and its 
quantification over ${\bf x}_1, \ldots , {\bf x}_n$ 
is $f_1, \ldots , f_n$. If the feature is absent in ${\bf x}_i$ then $f_i=0$
($1 \leq i \leq n$). Such a feature can also exist of
an upper semicomputable arithmetic combination of features.
Consider $m_f = \min_i \{f_i\}$. Let
an admissible distance $D({\bf X})$ be defined in terms of 
$M_f =\max \{m_f: \:f$ is a feature in 
all $x \in X \}$. If $D({\bf X})$ is equal to 0, then 
there is a $Y=\{y_1,\ldots y_m\}$ with ${\bf Y}={\bf X}$ such that
all elements ${\bf y}_i$ of ${\bf Y}$ are equal.
In this case $E_{\max}({\bf X})=0$ as well.
In fact, since $E_{\max}( {\bf X})$ is an admissible distance it
is the least such distance and incorporates admissible distances 
of all features.
\end{remark}
The numerator of \eqref{eq.NID} is $E_{\max}({\bf X})$ 
(up to a logarithmic additive term
which is negligible), and the denominator is
a normalizing factor
${\max_{x \in X} \{K({\bf x})}$.
Therefore \eqref{eq.NID} expresses a greatest lower bound
on the similarity on a scale from 0 to $|X|-1$ among
the elements ${\bf x}_i$ of ${\bf X}$ according to the 
normalized admissible distances.
Hence, the closer $g(X)$ approximates ${\bf m}({\bf X})$
from below the better $NGD(X)$ approximates \eqref{eq.NID}
with equality for $g(X) = {\bf m}({\bf X})$.
\end{proof}

\section{Applications}\label{sect.appl}
The application of the approach presented here requires the ability to query a database for the number of occurrences and co-occurrences of the elements in the multiset that we wish to analyze. One challenge is to find a database that has sufficient breadth as to contain a meaningful numbers of co-occurrences for related terms. As discussed previously, an example of one such database is the World Wide Web, with the page counts returned by Google search queries used as an estimate of co-occurrence frequency. There are two issues with using Google search page counts. The first issue is that Google limits the number of programmatic searches   in a single day to a maximum of 100 queries, and charges for queries in excess of 100 at a rate of up to \$50 per thousand. This limit applies to searches processed with the Google custom search application programmers interface (API). To address this constraint, we present results for multisets of cardinality small enough to allow a single analysis to be completed with less than 100 Google queries. For a multiset of cardinality $n$, the number of queries required for a pairwise NGD to form a full pairwise distance matrix using the symmetry of the NGD as discussed in Section \ref{sect.theory} requires us to to compute the entire upper triangular portion of the distance matrix to asses the distances among all pairs out of $n$ elements, plus the $n$ diagonal singleton elements. This requires $(n^2-n)/2+n$ queries. For the NGD of multisets, the number of queries required for an set of cardinality $n$ is $2*n*c+n$,  where $c$ is the number of classes. This number of queries results from 2 calculations for each element $x$  and class $A$, $NGD(Ax)$ and $NGD(A)$, and $n$ calculations for the singleton elements as described in \cite{CV13}, eqn. VIII.1. (We write $Ax$ for $A \bigcup \{x\}$.)  In a leave-one-out cross validation, results from these queries can be precomputed for each out of $c$ classes and for each individual search element out of $n$ of those, reducing the number of queries to $n*c+c+n$.  The multiset formulation has an advantage over the pairwise of requiring less queries. 
For example, for $c=2$ and $n>3$ the pairwise case requires more search queries 
than the multiset case.

The second issue with using Google web search page counts is that the numbers are not exact, but are generated using an approximate algorithm that Google has not disclosed. For the questions considered in the current work and previously \cite{CV07} we have found that these approximate measures are sufficient to generate useful answers, especially in the absence of any a priori domain knowledge. 

It is possible to implement internet based searches without using search engine API's, and therefore not subject to daily limits. Any internet search that returns a results count can be used in computing the NGD. This includes the Google web page interface as well as the Bing search engine. The results from any other custom internet searches, such as the NIH PubMed Central site can also be parsed to extract the page count results. For the sample applications discussed here we present results using the Google search API, as well as results using the Google and Bing web user interface. Interestingly, the Google web user interface returns different results compared to the Google API, with the Google API performing better in some cases and equivalent in others.

One alternative to using search queries on the World Wide Web is to use a large database such as the Google books n-gram corpus \cite{Michel11}. The n-gram corpus computes word co-occurrences from over 8 million books, or nearly 6\% of all the books ever published. Co-occurrences achieving a frequency of less than 40 are omitted. This data is available for download free of charge, although the sheer size of the data set does make it challenging to download and parse in a time efficient manner. The n-grams data set is available for [1-5]-grams. For the studies described here, the 5-gram files were used for co-occurrence counts and the 1-gram files were used for occurrence counts. The files were downloaded and decompressed as needed. They were then preprocessed using the Unix 'grep' utility to preserve only the relevant terms. Finally, a MATLAB script for regular expression analysis was used to extract co-occurrence counts. Despite the large size of the database, there was not enough  co-occurrence data to extract useful  counts for the  calculating the  multiset distance using the n-gram data. Results were  obtained only for the pairwise analysis.
While in the regular use of the NGD \eqref{eq.ngd} and \eqref{eq.NGD}
the frequency counts of web pages are based on 
the co-occurrence of search terms on that web page, in the n-gram
case the further condition holds that the co-occurrences must be
within $n$ words. Thus, one substitutes the frequency counts in
the NGD formulas with a frequency count that is less or equal.

Here we describe a comparison of the NGD using the multiset formulation based on Google web search page counts with the pairwise NGD formulation based on Google API and web search page counts, with Bing web search interface page counts and also with a pairwise formulation of the NGD based on Google books n-gram corpus. We consider example classification questions that involve partitioning a set of words into underlying categories. For the NGD multiset classification, we proceed as in \cite{CV13}, determining whether to assign element $x$ to class $A$ or class $B$ by computing $NCD(Ax)-NCD(A)$ and $NCD(Bx)-NCD(B)$ and assigning element $x$ to whichever class achieves the minimum. Although the NGD works with multisets of cardinality two or larger, the classification approach used here requires that each class have cardinality of three or larger so that during cross validation there is never a scenario where we need to compute the NGD for a set of cardinality one.

For the pairwise formulation, we use the gap spectral clustering unsupervised approach developed in \cite{CO09}. 
Gap spectral clustering uses the gap statistic as first proposed in \cite{Tib01} to estimate the number of clusters in a data set from an arbitrary clustering algorithm. In \cite{CO09}, it was shown that the gap statistic in conjunction with a spectral clustering \cite{Ng02} of the distance matrix obtained from pairwise NCD measurements is an estimate of randomness deficiency for clustering models. Randomness deficiency is a measure of  the meaningful information that a model, here a clustering algorithm, captures in a particular set of data \cite{LV08}. The approach is to select the number of clusters that minimizes the randomness deficiency as approximated by the gap value. In practice, this is achieved by picking the first number of clusters where the gap value achieves a maximum as described in \cite{CO09}. 

The gap value is computed by comparing the intra-cluster dispersions of the pairwise NGD distance matrix to that of uniformly distributed randomly generated data on the same range. For each value of $k$, the number of clusters in the data, we apply a spectral clustering algorithm to partition the data, assigning each element in the data to one of $k$ clusters. Next, we compute $D_r$, the sum of the distances between elements in each cluster $C_r$,
\[{{D}_{r}}=\sum\limits_{i,j\in {{C}_{r}}}{{{d}_{i,j}}}.\]
The average intra-cluster dispersion is calculated,
\[{{W}_{k}}=\sum\limits_{r=1}^{k}{\frac{1}{2{{n}_{r}}}}{{D}_{r}},\]
where $n_r$ is the number of points in cluster $C_r$.
The gap statistic is then computed as the difference between the intra-cluster distances of our data and the intra-cluster distances of $B$ randomly generated uniformly distributed data sets of the same dimension as our data,
\[Gap(k)=\frac{1}{B}\sum\limits_{b=1}^{B}{\log ({{W}_{kb}}})-\log ({{W}_{k}}),\]
where ${{W}_{kb}}$ is the average intra-cluster dispersion obtained by running our clustering algorithm on each of the $B$ randomly generated uniformly-distributed datasets. Following \cite{CO09} we set $B$ to 100. We compute the standard deviation of the gap value $s_k$ from $\sigma_k$, the standard deviation of the $B$ uniformly distributed randomly generated data, adjusted to account for simulation error, as
\[{{s}_{k}}={{\sigma }_{k}}\sqrt{1+{}^{1}/{}_{B}}.\]
Finally, we choose the smallest value of $k$ for which
	\[Gap(k)\ge Gap(k+1)-{{s}_{k+1}}.\]

We now describe results from a number of  sample applications. For all of these applications, we use a single implementation based on co-occurrence counts. We compare the results of the NGD multiset, the NGD pairwise used with gap spectral clustering, and for the first question, the Google n-grams data set used pairwise with gap spectral clustering. The n-grams data set, despite being an extremely large database, did not have the scope to be used with the multiset formulation, or with the more detailed sample applications. We include results using the Google API and the Google and Bing web user interfaces. The results are summarized in Figure \ref{tab.AllResults}.

\begin{figure}[htbp]
\begin{center}
\includegraphics{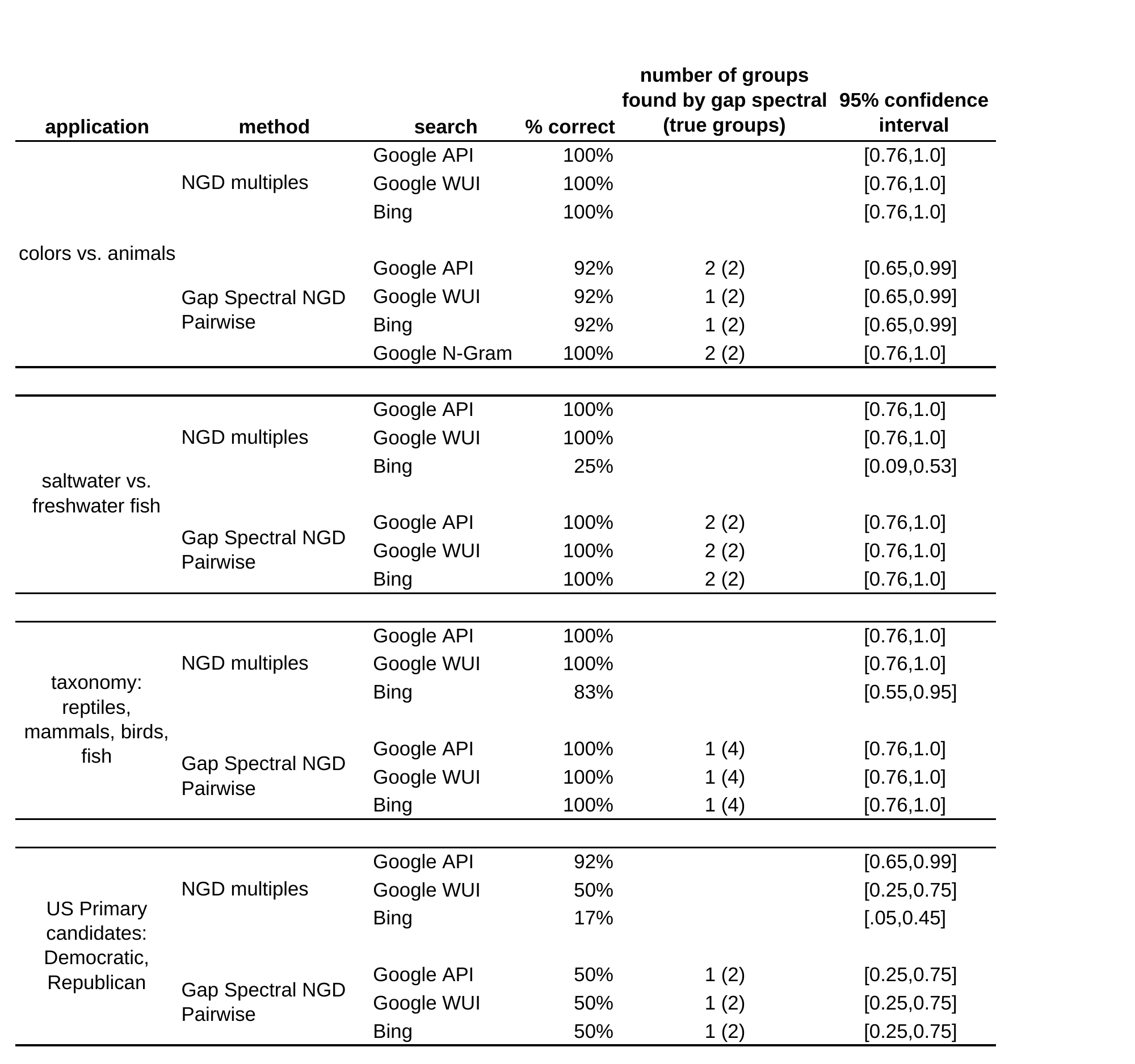}
\end{center}
\caption{Summary of results for different applications, methods and search interfaces.}
\label{tab.AllResults}
\end{figure}

In the first question we consider classifying colors vs. animals : \{'red', 'orange', 'yellow', 'green', 'blue', 'indigo'\} vs. \{'lion', 'tiger', 'bear', 'monkey', 'zebra', 'elephant'\}. For this question, with the pairwise NGD, gap spectral clustering found two groups in the data and classified all of the elements correctly  except  that 'indigo' was classified  with the  animal class. The NGD multiset formulation classified the terms perfectly. Using the pairwise NGD with the n-gram corpus gap spectral clustering found two groups in the data and also classified the terms perfectly. Figure \ref{tab.results} shows the distance matrices computed for the three approaches. We obtained identical results using the Google API as with the Bing and Google web user interfaces, although for the Gap Spectral clustering, the Bing and Google web user interface results did not find the correct number of groups in the data.

\begin{figure}[htbp]
\begin{center}
\includegraphics{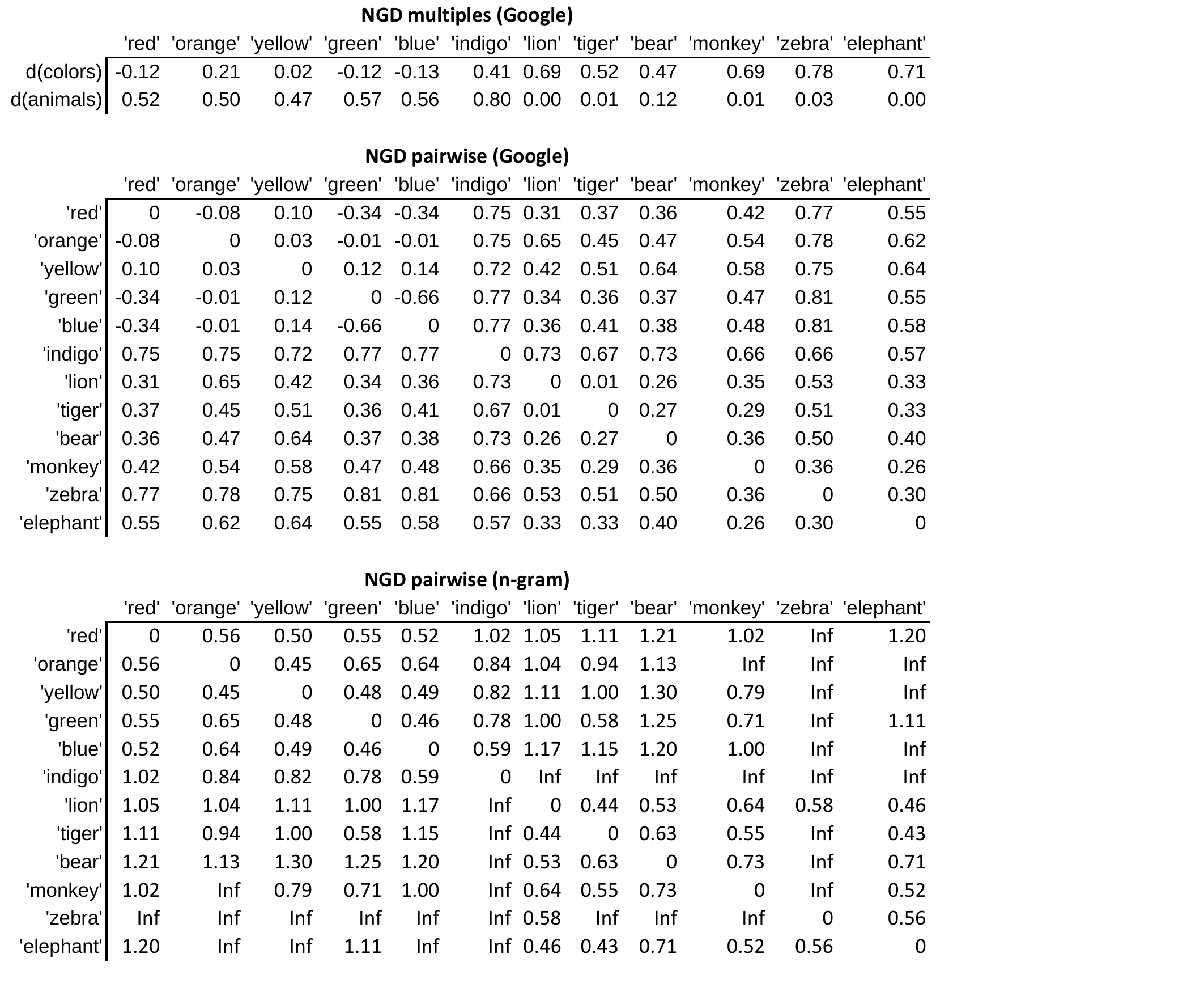}
\end{center}
\caption{Distances computed for the animals vs. colors question. The distances computed via leave-one-out cross validation with the NGD for multisets using Google API page counts (top). Distances computed using the pairwise NGD with Google API page counts (middle). Distances computed using pairwise Google Books n-grams co-occurrences (bottom).}
\label{tab.results}
\end{figure}

The next question considered compared saltwater vs. freshwater fish: \{'Labidochromis caeruleus', 'Sciaenochromis fryeri', 'Betta splendens', 'Carassius auratus', 'Melanochromis cyaneorhabdos'\} vs. \{'Ecsenius bicolor', 'Pictichromis paccagnellae', 'Amphiprion ocellaris', 'Paracanthurus hepatus', 'Chromis viridis'\}. For this question, using the pairwise NGD, gap spectral clustering found two groups in the data and classified all of the elements correctly.   The NGD multisets formulation also classified the terms perfectly. The pairwise NGD with the n-gram corpus did not have enough data to classify any of the elements for this question or any of the remaining questions. The Bing web user interface performed poorly in the NGD multisets formulation, classifying only 3 of 12 elements correctly.

In a related question, we next consider a 4 class problem based on subclasses of reptiles, mammals, birds and fish. We consider 12 taxonomic subclasses: \{'Ichthyosauria', 'Lepidosauromorpha', 'Archosauromorpha'\} (reptiles), \{'Allotheria', 'Triconodonta', 'Holotheria'\} (mammals), \{'Agnatha', 'Chondrichthyes', 'Placodermi'\} (fish) and \{'Archaeornithes', 'Enantiornithes', 'Hesperornithes'\} (birds). The NGD for multisets classified all 12 subclasses correctly. Again using the Bing web user interface with the multisets distance the classification accuracy was significantly worse than with Google search. Using the NGD pairwise, the gap statistic found only one group in the data. This was a local maxima, there was a strong global maximum at 4 groups in the data as shown in Figure \ref{fig.gap}. Clustering into the four groups with spectral clustering classified all elements correctly.

Finally, we considered the 12 candidates from the republican and democratic primaries in the 2008 US presidential election, \{'Barack Obama', 'Hillary Clinton','John Edwards', 'Joe Biden', 'Chris Dodd','Mike Gravel'\} (Democratic) and \'John McCain', 'Mitt Romney', 'Mike Huckabee','Ron Paul', 'Fred Thompson', 'Alan Keyes'\} (Republican). The candidates were ordered by the length of time they remained in the race. The NGD multisets classified 11 of the 12 correctly, misclassifying 'Ron Paul' with the Democrats, possibly a result of his strongly libertarian political views. Gap spectral clustering with the pairwise NGD found only one group in the data. Spectral clustering the candidates into two groups was not effective at classification, grouping the candidates more by popularity than by party, \{'Barack Obama', 'Hillary Clinton', 'Joe Biden', 'John McCain', 'Mitt Romney', 'Ron Paul'\} and \{'John Edwards', 'Chris Dodd', 'Mike Gravel', 'Mike Huckabee', 'Fred Thompson','Alan Keyes'\}. The Google API was more accurate than the Google web interface for this question, and again the Bing search engine performed poorly relative to Google in the multisets formulation.

\begin{figure}[htbp]
\begin{centering}
\includegraphics{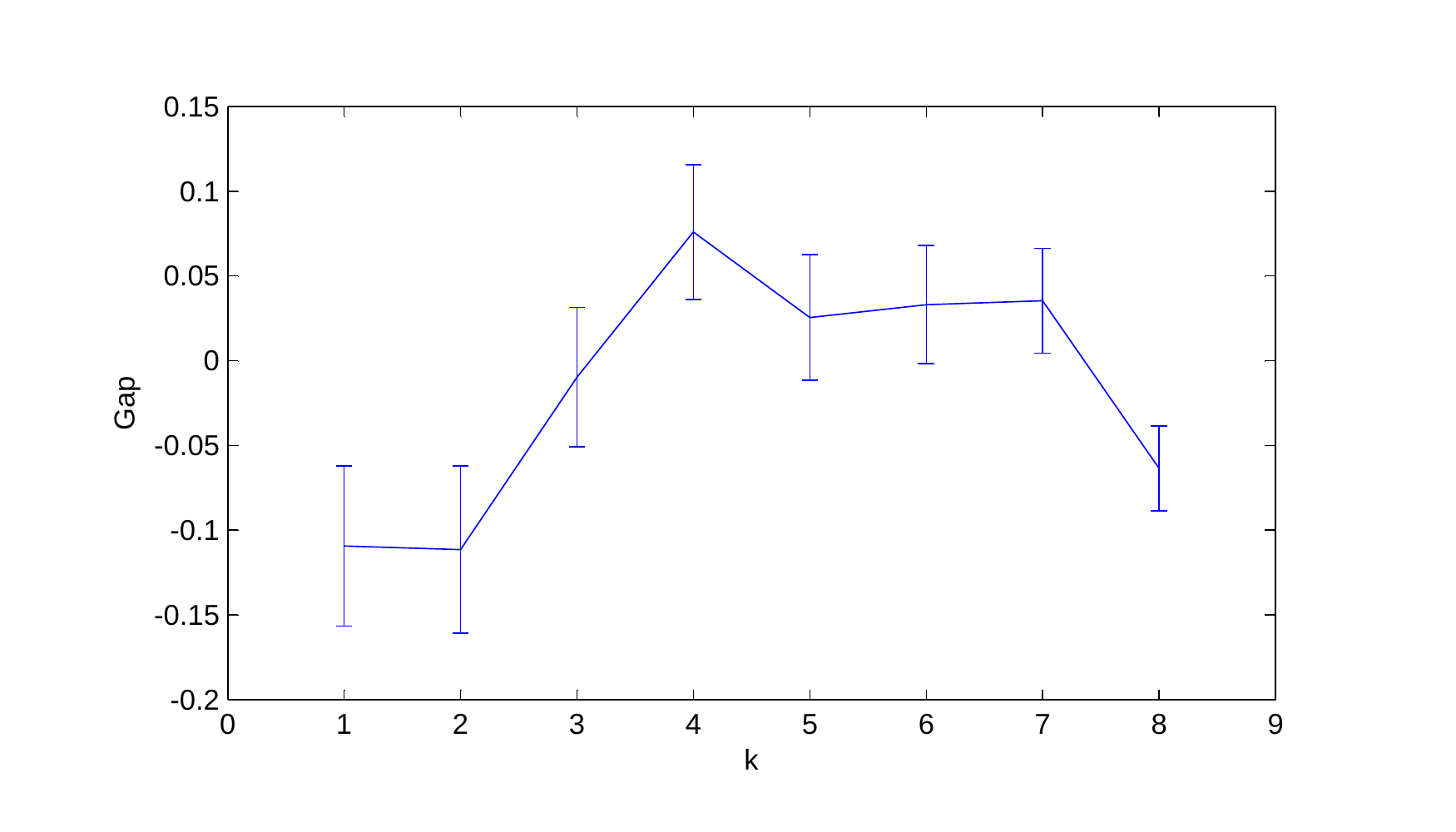}
\end{centering}
\caption{Plot of gap statistic for the four class taxonomy question. The gap selected the local maxima at 1. The global maxima seen at 4 is the correct answer.}
\label{fig.gap}
\end{figure}

\section{Conclusions and Discussion}\label{sect.concl}

We propose a method, the normalized Google distance (NGD) for multisets,
say $X$, that quantifies in a single number between 0 and $|X|-1$ 
the way in which
a finite nonempty multiset of objects is similar. This similarity
is based on the names of the objects and uses the information
supplied by the frequency counts of pages on which the names
occur. For instance, the pages are pages of the World Wide Web
and the frequency counts are supplied by a search 
engine such as Google. The method can be applied using any big data
base (Wikipedea, Oxford English Dictionary) and a search engine
that returns aggregate page counts. Since this method uses names for 
object, and not the objects themselves, we can view those names
as words and the similarity between words in a multiset 
of words as a shared relative
semantics between those words. (What is stated about ``words'' 
also holds for ``phrases.'')
The similarity between a finite nonempty multiset of
words is called a distance or diameter of this multiset. 
Using the theory developed we show properties
of this distance and especially that it is not a metric
(Theorem~\ref{theo.triangle}), in certain cases
the distance may be 0 for distinct names, 
in non-pathological cases it ranges in between 0 and $|X|-1$ (for
a multiset $X$), and does not
satisfy the triangle property. However, in the 
(World Wide Web, Google) practice we did not find examples
of violating the triangle property.  We showed that the closer the probability distribution  
of names supplied by the (data base, search engine) pair
is to the universal distribution, the closer the computed similarity
is to the ultimate effective similarity (the ``real'' similarity) 
with equality in the limit
(Theorem~\ref{theo.ideal}). For instance, the World Wide Web is 
so large and Google so good that the two are close. 

Earlier \cite{CV07} this approach was used for pairs of names,
the pairwise NGD,
an subsequently widely used, see the discussion in 
Section~\ref{sect.intro}. To test
the effacity of the new method we compared its 
results (Section~\ref{sect.appl}) 
with those of the pairwise method (together with some embellishments)
on small data sets, to wit colors versus animals, 
saltwater versus freshwater fish, taxonomy of reptiles mammals birds
fish, and US Primary candidates: Democratic versus Republican.
We used the World Wide Web and Google API, Google WUI, Bing,
and Google n-grams. The results showed in these instances
superiority or equality for the multiset NGD over the
pairwise NCD for Google. Bing performed poorly and the n-gram method
was only usable for the pairwise NGD in view of the fact
that $n \leq 5$ only and the cardinality of the test multisets were
too large.

\bibliographystyle{plain}

\end{document}